\documentclass[sigconf]{}

\AtBeginDocument{%
  \providecommand\BibTeX{{%
    \normalfont B\kern-0.5em{\scshape i\kern-0.25em b}\kern-0.8em\TeX}}}

\copyrightyear{2021}
\acmYear{2021}
\setcopyright{acmcopyright}\acmConference[ICTIR '21]{Proceedings of the 2021 ACM SIGIR International Conference on the Theory of Information Retrieval}{July 11, 2021}{Virtual Event, Canada}
\acmBooktitle{Proceedings of the 2021 ACM SIGIR International Conference on the Theory of Information Retrieval (ICTIR '21), July 11, 2021, Virtual Event, Canada}
\acmPrice{15.00}
\acmDOI{10.1145/3471158.3472235}
\acmISBN{978-1-4503-8611-1/21/07}

\settopmatter{printacmref=true}

\usepackage[ruled,vlined]{algorithm2e}
\usepackage{breqn}
\usepackage{algorithmic}
\usepackage{balance}

\newtheorem{theorem}{Theorem}[section]

\newcommand\figref[1]{{Figure \ref{fig:#1}}}
\newcommand\tabref[1]{{Table \ref{tab:#1}}}

\setcounter{topnumber}{5}
\setcounter{bottomnumber}{5}
\setcounter{totalnumber}{10}
\setlength\floatsep{2truemm}
\setlength\textfloatsep{2truemm}
\setlength\intextsep{2pt}
\setlength\abovecaptionskip{1truemm}
\AtBeginDocument{
  \abovedisplayskip     =0.5\abovedisplayskip
  \abovedisplayshortskip=0.5\abovedisplayshortskip
  \belowdisplayskip     =0.5\belowdisplayskip
  \belowdisplayshortskip=0.5\belowdisplayshortskip}

\pagestyle{plain} 



\begin{document}
\fancyhead{}

\title{Decomposition and Interleaving for \\Variance Reduction of Post-click Metrics}


\author{Kojiro Iizuka}
\affiliation{%
  \institution{Gunosy Inc. / University of Tsukuba}
}
\email{iizuka.kojiro@gmail.com}

\author{Yoshifumi Seki}
\affiliation{%
  \institution{Gunosy Inc.}
}
\email{yoshifumi.seki@gunosy.com}

\author{Makoto P. Kato}
\affiliation{%
  \institution{University of Tsukuba / JST, PRESTO}
}
\email{mpkato@acm.org}

\renewcommand{\shortauthors}{Trovato and Tobin, et al.}

\begin{abstract}
In this study, we propose an efficient method for comparing the post-click metric (e.g., dwell time and conversion rate) of multiple rankings in online experiments.
The proposed method involves (1) the decomposition of the post-click metric measurement of a ranking into a click model estimation and a post-click metric measurement of each item in the ranking,
and (2) interleaving of multiple rankings to produce a single ranking that preferentially exposes items possessing a high population variance.
The decomposition of the post-click metric measurement enables the free layout of items in a ranking and focuses on
the measurement of the post-click metric of each item in the multiple rankings.
The interleaving of multiple rankings reduces the sample variance of the items possessing a high population variance by optimizing a ranking to be presented to the users so that those items received more samples of the post-click metric.
In addition, we provide a proof that the proposed method leads to the minimization of the evaluation error in the ranking comparison
and propose two practical techniques to stabilize the online experiment.
We performed a comprehensive simulation experiment and a real service setting experiment.
The experimental results revealed that (1) the proposed method outperformed existing methods in terms of efficiency and accuracy, and the performance was especially remarkable when the input rankings shared many items,
and (2) the two stabilization techniques successfully improved the evaluation accuracy and efficiency.

\end{abstract}

\begin{CCSXML}
<ccs2012>
   <concept>
       <concept_id>10002951.10003317.10003359.10003363</concept_id>
       <concept_desc>Information systems~Retrieval efficiency</concept_desc>
       <concept_significance>500</concept_significance>
       </concept>
   <concept>
       <concept_id>10002951.10003317.10003359.10003362</concept_id>
       <concept_desc>Information systems~Retrieval effectiveness</concept_desc>
       <concept_significance>500</concept_significance>
       </concept>
 </ccs2012>
\end{CCSXML}

\ccsdesc[500]{Information systems~Retrieval efficiency}
\ccsdesc[500]{Information systems~Retrieval effectiveness}
\keywords{interleaving, online evaluation, post-click metrics}

\maketitle

\section{Introduction}
Online controlled experiments are conducted daily to evaluate search and recommendation algorithms.
A/B testing is a common approach that compares two different outcomes by showing them to two different user groups. 
A/B testing is applicable for a variety of purposes, including 
 measurement of click-based metrics (e.g., click-through rate (CTR)),
and measurement of {\it post-click metrics}
(e.g., music listening time~\cite{garcia2018understanding},
news reading time~\cite{okura2017yahoo, xing2014dwell}, and the number of reservations~\cite{grbovic2018airbnb}).
As post-click metrics are closely related to user satisfaction and the sales of services,
the measurement of post-click metrics is particularly important for the continuous improvement of algorithms in search and recommender systems.

However, there are some challenges in measuring post-click metrics in online experiments.
Suppose that a news article is ranked at the bottom of a ranking,
for which users spend a significantly different length of time to read.
Generally, low-ranked items are infrequently clicked.
Thus, the post-click metric results in high variance of the sample mean.
Moreover, some types of post-click metrics highly depending on users are of high {\it population} variance by nature, 
and also likely to result in a high variance of the sample mean.
High variance in each item naturally leads to high variance for the whole ranking.
This high variance can prevent online experiments from efficiently discriminating between competitive rankings.

This study proposes an efficient method for comparing the multiple rankings of post-click metrics in online experiments.
The key ideas of the proposed method are (1) the decomposition of the post-click metric measurement of a ranking into a click model estimation and post-click metric measurement of each item in the ranking,
and (2) interleaving multiple rankings to produce a single ranking that preferentially exposes items with high population variances.
The decomposition of the post-click metric measurement permits free layout of items in a ranking and focus on 
the measurement of the post-click metric of each item in multiple rankings.
The interleaving of multiple rankings reduces the variance of the sample mean for items with a high population variance by optimizing the ranking to be presented so that those items received more samples of the post-click metric.
In this paper, the method is referred to as the {\it Decomposition and Interleaving for  Reducing the Variance of post-click metrics} ({\it DIRV}).
DIRV uses interleaving to evaluate post-click metrics for the variance reduction, 
which is a major distinction from interleaving designed to evaluate click-based metrics. 

In addition to the proposal of DIRV, 
we have the following theoretical and technical contributions in this work.
First, we proved that the ranking optimization by the DIRV
 leads to minimization of the evaluation error in the ranking comparison.
Second, we propose two techniques to stabilize the evaluation by the DIRV.
The first technique predicts the population variance of each item based on the observed samples and item features
for stabilizing the ranking optimization at the beginning of the online experiment.
The second technique corrects the systematic error between the estimated post-click metrics and the ground-truth post-click metrics.

We performed comprehensive experiments in simulation as well as real service settings.
The experimental results revealed that (1) the proposed method outperformed existing methods in terms of efficiency and accuracy, and the performance was especially remarkable when the input rankings shared many items,
(2) the two stabilization techniques successfully improved the evaluation accuracy and efficiency.

The major contributions of this study are as follows:
\begin{itemize}
\item To efficiently compare post-click metrics of multiple rankings,
we proposed an interleaving method (DIRV) that decomposes the post-click metric measurement and preferentially exposes items with high population variance to minimize the evaluation error.
\item We provided a theoretical guarantee that the DIRV ranking optimization minimizes the evaluation error in the ranking comparison.
\item We proposed two techniques to stabilize the evaluation by DIRV and demonstrated that these techniques were empirically effective.
\item We extensively evaluated DIRV using both simulation and real service settings.
The results demonstrated its high accuracy and efficiency.
\item We published the real service data used in our experiments.
This data can be used to validate our study and be used in future research on user modeling and bandit algorithms.
\end{itemize}

The remainder of this paper is organized as follows: 
Section \ref{sec:relatedwork} provides an overview of the related work.
Section \ref{sec:background} describes the problem setting in this study.
Section \ref{sec:proposal} explains the proposed method.
Sections \ref{sec:experimentsetup} and \ref{sec:results} report the experimental settings and 
discuss the experimental results, respectively.
Finally, Section \ref{sec:conclusion} presents the conclusions and future work on this topic.

\section{Related Work}
\label{sec:relatedwork}
Online evaluations are the basis of data-driven decision-making~\cite{preexperiment}.
A typical online evaluation method is A/B testing~\cite{kohavi2013ab}, 
while there are also alternatives such as interleaving and bandit algorithms.

{\it Interleaving} (or {\it multileaving}, which refers to interleaving of more than two rankings) is a method used to increase the efficiency of online evaluation.
Interleaving was reported to be 10 to 100 times more efficient than A/B testing~\cite{radlinski2008does, chapelle2012large}.
Schuth et al. proposed two multileaving methods called team draft multileaving~(TDM) and optimized multileaving (OM)~\cite{schuth2014multileaved}.
Recently, sample-scored-only multileaving~(SOSM)~\cite{brost2016improved} and pairwise preference multileaving~(PPM) ~\cite{oosterhuis2017ppm} were developed as more scalable multileaving methods.
While PPM designed to evaluate click-based metrics is the state-of-the-art method in the interleaving, it is not trivial how to modify PPM to evaluate post-click metrics.
Specifically, Schuth et al. found that a simple extension of the TDM resulted in a low accuracy in the non-click-based metrics (e.g., time to click metrics)~\cite{schuth2015predicting}.
Our approach differs from previous interleaving methods because we formulated a new {\it credit function},
which is used to aggregate user feedback in the interleaving, 
as the expectation of post-click metrics for accurate evaluation.

Variance reduction is a technique commonly used to improve efficiency in online evaluations.
\cite{treemodel} used a boosted decision tree regression to reduce the variance by matching similar users.
The efficiency was improved in~\cite{preexperiment} by utilizing pre-experiment data through variance reduction.
\cite{casenetflix} implemented stratification for variance reduction.
These studies reduced the variance for each tested group in a bucket-level evaluation like A/B testing.
In contrast to bucket-level variance reduction, we reduce the item-level variances in post-click values based on the variance of the post-click values differing greatly for each item.
By reducing item-level variances, we were able to improve the evaluation efficiency.
Oosterhuis and de Rijke~\cite{oosterhuis2020taking} applied variance reduction to an efficient evaluation of CTR.
However, it was unclear how to apply this method to a post-click metric, which we focus on in this work.
The proposed method in~\cite{oosterhuis2020taking} relies heavily on the examination probability and
has only been validated by simulation-based experiments.
We found that the evaluation accuracy of the post-click metrics was highly compromised when there was an estimation error in the examination probability of the real service data.
We developed a stabilization technique to address this problem.
The stabilization technique reduces the systematic error caused by the estimation error of the examination probability.

Recently, several off-policy evaluation methods have been developed~\cite{gruson2019offline, joachims2017unbiased, wang2018position}.
The goal of off-policy evaluation is to estimate the performance of a policy from the data generated by another policy(ies) in reinforcement learning~\cite{farajtabar2018more}.
Specifically, Saito~\cite{saitoconversion} proposed an unbiased estimator for post-click metrics.
Our method is different in that we interleave rankings dynamically so that the estimation error is minimized by the variance reduction in an on-policy manner.

The multi-armed bandit problem is a problem in which a fixed limited set of resources must be allocated between competing choices in a way that maximizes the expected gain \cite{burtini2015survey, bouneffouf2019survey}.
Algorithms used for bandit problems are similar to the proposed method of evaluating multiple rankings in this study.
However, the objective of a bandit algorithm is different.
The objective of a bandit algorithm is to maximize the total gain during a specific period or to identify the best arm.
The objective of our study is to evaluate the superiority or inferiority of each pair of rankings.
The evaluation result for each pair cannot be obtained from a bandit algorithm.
However, the evaluation result for each pair may help decision-makers choose the best ranking in terms of effectiveness and the cost of producing the rankings.

\section{Problem Setting}
\label{sec:background}

The primary goal of this study is to compare multiple {\it input rankings} $R=\{r_1, r_2, \ldots\}$ in terms of a specific post-click metric in an online experiment.
A user is shown a ranking in response to her/his query,
clicks the items in the ranking, and consumes the clicked items.
The post-click metric of an item can only be measured after the item has been clicked by the user.
To compare rankings, we define the post-click metric of a ranking 
as the expectation of the post-click metric of items in ranking $r_i$:
\begin{eqnarray}
E[x|r_i] &=& \int_{x} xP(x|r_i) dx 
\label{original_expectation}
\end{eqnarray}
where $x$ is a random variable representing a value of the post-click metric,
and $P(x|r_i)$ is the probability of observing $x$ when a user interacts with the ranking $r_i$.
$E[x|r_i]$ can be interpreted as how effective the ranking is to trigger users' behaviors (e.g., conversion) quantified by the post-click metric.
Thus, $E[x|r_i]$ can be used to evaluate rankings based on users' post-click behaviors.

Following \cite{oosterhuis2017ppm, brost2016improved, schuth2015probabilistic}, 
we set the goal of our study to efficiently estimate the pairwise preference between rankings.
The pairwise preference between rankings is defined as a matrix ${\mathbf P} \in {\mathbb R}^{|R| \times |R|}$, 
where $P_{i,j}$ indicates the difference between the expected post-click metric for ranking $r_i$ and $r_j \in R$, 
that is, 
\[
P_{i,j} = E[x|{\it r_i}]  -  E[x|{\it r_j}].
\]
A positive value of $P_{i, j}$ indicates that the ranking $r_i$ is superior to $r_j$.
The binary error $E_{\rm bin}$ is commonly used error metric to evaluate $\mathbf{P}$~\cite{oosterhuis2017ppm, brost2016improved, schuth2015probabilistic}.
Letting $\overline{P}_{i,j}$ be the preference estimated by a certain method,
the binary error of this estimate is defined as follows:
\begin{equation}
E_{\rm bin} = \frac{1}{|R| (|R|-1)} \sum_{r_i, r_j \in R}{{\rm sgn}(P_{i,j}) \neq  {\rm sgn}(\overline{P}_{i,j})},
\label{binerror}
\end{equation}
where ${\rm sgn(\cdot)}$ returns -1 for negative values, 1 for positive values,
and 0 otherwise.
The operator $\neq$ returns 1 whenever the signs are unequal.

A straightforward approach for estimating $P_{i, j}$ (or obtaining $\overline{P}_{i, j}$) is to conduct A/B testing with the rankings $r_i$ and $r_j$, take the mean of $x$ for each of the rankings, and then approximate $E[x|{\it r_i}]$ and $E[x|{\it r_j}]$ using these means.
We discuss possible improvements over A/B testing and 
introduce our proposed method in the next section.

\section{METHODOLOGY}
\label{sec:proposal}
We propose a method named {\it Decomposition and Interleaving for Reducing the Variance of post-click metrics} ({\it DIRV}) for efficient evaluation based on post-click metrics.
We first discuss possible improvements over A/B testing by deeply look at the expectation of a post-click metric and introduce DIRV as our solution.
We then show that the optimization by DIRV leads to the reduction of the binary error defined in Equation (\ref{binerror}).
Finally, we propose two techniques to stabilize evaluations by DIRV in actual applications.

\subsection{Decomposition}
\label{clickmodelassumption}

We deeply look at Equation (\ref{original_expectation}) by decomposing it with some assumptions and discuss possible improvements to reduce the evaluation error.

In Equation (\ref{original_expectation}),
the value of a post-click metric $x$ can be observed for each clicked item in ranking $r_i$.
Thus, $P(x|r_i)$ can be obtained by marginalizing all of the items in $r_i$,
with the assumption that $x$ depends solely on $d$, but not on the ranking $r_i$:
\begin{eqnarray}
P(x|r_i) = \sum_{d \in r_i} P(x|d)P(c_d=1|r_i) 
\end{eqnarray}
where $c_d$ is a random binary variable indicating an event of click on $d$,
$P(c_d=1|r_i)$ is the click probability of $d$ in the ranking $r_i$,
and $P(x|d)$ is the probability of observing $x$ at $d$.

The independence assumption between $x$ and $r_i$
allows the following decomposition of the expectation of the post-click metric:
\begin{eqnarray}
E[x|r_i] &=& \int_{x} x \left\{ \sum_{d \in r_i} P(x|d)P(c_d=1|r_i) \right\} dx \nonumber \\ 
&=& \sum_{d \in r_i} \int_{x} x  P(x|d) P(c_d=1|r_i) dx \nonumber \\
&=& \sum_{d \in r_i} P(c_d=1|r_i) \int_{x} x  P(x|d)  dx 
= \sum_{d \in r_i} P(c_d=1|r_i) E[x|d] \nonumber \\
\label{expectation}
\end{eqnarray}
where $E[x|d]$ is the expectation of the post-click metric for item $d$. 

This equation immediately suggests that an unbiased estimate of $E[x|r_i]$ can be obtained by:
\begin{eqnarray}
\overline{E}[x|{\it r_i}] = \sum_{d \in r_i} \overline{P}(c_d=1|r_i) \overline{E}[x|d] 
\label{eq:first_decomposition}
\end{eqnarray}
where $\overline{P}(c_d=1|r_i)$ and $\overline{E}[x|d]$ are unbiased estimates for $P(c_d=1|r_i)$ and $E[x|d]$, respectively.
A standard approach to obtain these two estimates is to use the sample means of the random variables $c_d$ and $x$ for $\overline{P}(c_d=1|r_i)$ and $\overline{E}[x|d]$, respectively.
This decomposition can potentially achieve higher efficiency than A/B testing,
because the samples for item $d$ in {\it any} rankings in $R$
can be used to estimate $E[x|d]$.
Thus, we can increase the sample size for items that are shared by multiple rankings.
Rankings are likely to include identical items, especially when differences between similar rankings (e.g., the same algorithm with different parameters) are evaluated.

While only Equation (\ref{eq:first_decomposition}) enables better estimation of $E[x|d]$ by increasing the sample size, 
the use of the sample mean is not efficient enough especially when 
(1) the population variance of $x$ is high and 
(2) the sample size of $x$ is small due to small $P(c_d=1|r_i)$.
Both cases are likely to lead to high variance of  $\overline{E}[x|d]$, resulting in high variance of $\overline{E}[x|{\it r_i}]$.
Consequently, a large binary error in the ranking comparison is obtained.
Both cases can be alleviated by prioritizing 
the exposure of items that satisfy these conditions for increasing the sample size.
However, simple manipulation of a ranking prevents us from correctly estimating $P(c_d=1|r_i)$
since this probability depends on the position of item $d$ in ranking $r_i$.

Hence, we introduce a click model for further decomposing $\overline{P}(c_d=1|r_i) \overline{E}[x|d]$ in the summation.
We assume the examination hypothesis~\cite{chuklin2015click} that the item click can be decomposed into two variables: examination $e_d$ and attraction $a_d$.
This enables us to decompose the unbiased estimate $\overline{P}(c_d=1|r_i)$
as follows:
\begin{eqnarray}
\overline{P}(c_d=1|r_i) = \overline{P}(e_d = 1|r_i) \overline{P}(a_d = 1| d),
\label{eq:clickmodel}
\end{eqnarray}
where $\overline{P}(e_d = 1|r_i)$ can be estimated by a position-based click model or a cascade click model~\cite{chuklin2015click, craswell2008experimental}.
The position-based click model assumes that the examination probability depends solely on the rank of the item in ranking $r$: i.e., $\overline{P}(e_d = 1|r_i) = g({\rm rank}(d, r))$, where $g$ is a function taking a rank to return a probability, and ${\rm rank}(d, {\it r})$ is the rank of item $d$ in ranking $r$.
This assumption allows us to avoid estimating the probabilities specific to a particular ranking: i.e., $\overline{P}(e_d = 1|r_i)$.
The estimations of $E[x|d]$, ${P}(e_d = 1|r_i)$ and $P(a_d = 1|r_i)$ are detailed in Section \ref{parameterestimation}.
Note that the simple assumption for the click model might sacrifice the evaluation error for efficiency.
Thus, in Section \ref{sq:techniques}, we introduce a technique correcting the systematic error between the assumed and actual click models.

In summary, the decomposition of Equation (\ref{original_expectation})
increases the sample size of $x$ for estimating $E[x|d]$ and provides greater flexibility about the ranking presented to the users.
Now one can freely produce and present a ranking containing items from input rankings $R$,
and estimate each of $P(a_d = 1| d)$ and $E[x|d]$ based on the users' interactions with the presented ranking, in order to obtain $\overline{E}[x|{\it r_i}]$.
In the next subsection, we explain what ranking should be presented to the users for minimizing the evaluation error.

\subsection{Interleaving}

Our proposed DIRV method is designed to minimize the variance of a post-click metric 
by interleaving input rankings to produce a single ranking to be presented to the users.
The interleaved ranking preferentially exposes items with high population variance.
The method attempts to receive more samples (or clicks) for these items to reduce the variance of the input rankings.

As shown in Appendix \ref{dep},
the variance $V[\overline{E}[x|r_i]]$ can be defined 
as a monotonically decreasing function $\phi$ on the number of impressions of the item $d$ (denoted by $n^i_d$) and the number of clicks on the item $d$ (denoted by $n^c_d$):
\begin{eqnarray}
V[\overline{E}[x|r_i]] = \sum_{d \in r_i} \phi_{d, r_i}(n^i_d, n^c_d).
\label{varianceeqr}
\end{eqnarray}

Our proposed DIRV method
determines a ranking $o$ produced by interleaving input rankings.
The ranking $o$ minimizes the summation of $V[\overline{E}[x|{\it r_i}]]$ over all the rankings in $R$ 
when $o$ is exposed to a user:
\begin{eqnarray}
\min_o f(o) \nonumber
\label{ranking_optimization}
\end{eqnarray}
where 
\begin{eqnarray}
f(o) = \sum_{r_i \in R} V[\overline{E}[x|r_i]] = \sum_{r_i \in R} \sum_{d \in r_i} \phi_{d, r_i}(\dot{n}^i_d, \dot{n}^c_d ), \\
\dot{n}^i_d = n^i_d + 1, \ \ \ \ \dot{n}^c_d = n^c_d + E[c_d|o].
\end{eqnarray}
$\dot{n}_d$ indicates the expected sample size 
after the interleaved ranking $o$ is presented to a user.
The number of impressions, $n^i_d$, is incremented by one each time the ranking $o$ is presented.
The number of clicks for item $d$, $n^c_d$, is incremented by $E[c_d|o] = P(c_d=1|o)$,
which indicates the expected number of clicks on item $d$ in ranking $o$ for a single impression.

As a brute-force search is infeasible to determine the optimal ranking $o$,
we employ a greedy algorithm based on Equation (\ref{ranking_optimization}).
The greedy algorithm starts with an empty ranking $r$,
and repeats appending an item to $r$ that maximizes the difference between the current ranking and the ranking with the new item in terms of the variance of the sample mean:
\begin{equation}
\max_{d \in D \setminus r} \sum_{r_i \in R} \left(
\phi_{d, r_i}(n^i_d, n^c_d) - 
\phi_{d, r_i}(\ddot{n}^i_d, \ddot{n}^c_d) \right),
\end{equation}
where
\begin{equation}
\ddot{n}^i_d = n^i_d + 1, \ \ \ \ \ddot{n}^c_d = n^c_d + E[c_d|r \oplus d].
\end{equation}
and $D$ is a set of all items in $R$ and $r \oplus d$ is the ranking $r$ with $d$ appended to the bottom.
This greedy algorithm repeatedly finds the item to minimize the variance when it is appended.
The algorithm stops when the ranking reaches a predefined depth.

\subsection{Theoretical Justification}

We provide theoretical justification that the minimization of the summation of $V[\overline{E}[x|{\it r_i}]]$ over $R$ leads to the minimization of the expected binary error $E[E_{\rm bin}]$ defined in Equation (\ref{binerror}). 

\begin{theorem}
\begin{eqnarray}
\sum_{r_i \in R}V[\overline{E}[x|r_i]]
\end{eqnarray}
constitutes the upper bound of $E[E_{\rm bin}]$.
\label{biaserror}
\end{theorem}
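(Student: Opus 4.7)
The plan is to bound $E[E_{\rm bin}]$ pair-by-pair using a concentration inequality, then aggregate. Starting from the definition in Equation~(\ref{binerror}), linearity of expectation gives
\begin{equation*}
E[E_{\rm bin}] = \frac{1}{|R|(|R|-1)} \sum_{r_i,r_j \in R} \Pr\!\left[{\rm sgn}(P_{i,j}) \neq {\rm sgn}(\overline{P}_{i,j})\right],
\end{equation*}
so it suffices to bound each sign-mismatch probability in terms of a variance. The key observation is that a sign flip can only happen when the estimate deviates from the truth by at least $|P_{i,j}|$, i.e.\ $\{{\rm sgn}(P_{i,j}) \neq {\rm sgn}(\overline{P}_{i,j})\} \subseteq \{|\overline{P}_{i,j} - P_{i,j}| \geq |P_{i,j}|\}$. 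Since the decomposition in Equation~(\ref{eq:first_decomposition}) yields an unbiased estimator, $E[\overline{P}_{i,j}] = P_{i,j}$, and Chebyshev's inequality gives $\Pr[{\rm sgn}(P_{i,j}) \neq {\rm sgn}(\overline{P}_{i,j})] \leq V[\overline{P}_{i,j}] / P_{i,j}^2$.

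Next I would relate $V[\overline{P}_{i,j}]$ to the per-ranking variances that DIRV actually optimizes. Writing $\overline{P}_{i,j} = \overline{E}[x|r_i] - \overline{E}[x|r_j]$ and applying the Cauchy--Schwarz bound on covariance (which avoids having to assume independence between estimators that share items), one obtains
\begin{equation*}
V[\overline{P}_{i,j}] \leq 2\bigl(V[\overline{E}[x|r_i]] + V[\overline{E}[x|r_j]]\bigr).
\end{equation*}
Summing this inequality over the $|R|(|R|-1)$ ordered pairs, each ranking's variance $V[\overline{E}[x|r_i]]$ appears $2(|R|-1)$ times, so the double sum collapses to a constant multiple of $\sum_{r_i \in R} V[\overline{E}[x|r_i]]$. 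Letting $\Delta = \min_{i \neq j} P_{i,j}^2$ absorb the worst-case Chebyshev denominator into the constant, we conclude
\begin{equation*}
E[E_{\rm bin}] \leq \frac{C}{\Delta}\sum_{r_i \in R} V[\overline{E}[x|r_i]],
\end{equation*}
which exhibits $\sum_{r_i} V[\overline{E}[x|r_i]]$ as an upper bound (up to a multiplicative constant independent of the interleaving policy) on the expected binary error, so that the DIRV objective in Equation~(\ref{ranking_optimization}) minimizes this bound.

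The step I expect to be the main obstacle is the $P_{i,j}^2$ appearing in the Chebyshev denominator: the bound is vacuous for pairs of rankings whose true post-click metrics are nearly identical, and those pairs cannot be distinguished regardless of how small the variance gets. I would handle this by noting that such pairs are essentially indistinguishable under any estimator, so treating $\min_{i \neq j} |P_{i,j}|$ as a problem-dependent constant is standard and does not affect the claim that variance minimization is the right surrogate objective. A secondary subtlety is the covariance between $\overline{E}[x|r_i]$ and $\overline{E}[x|r_j]$ when they share items; the Cauchy--Schwarz bound circumvents this cleanly at the cost of a factor of two, which again is absorbed in the constant.
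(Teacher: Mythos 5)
Your proposal is correct and follows essentially the same route as the paper: bound each pairwise sign-mismatch probability by Chebyshev's inequality using the event inclusion $\{{\rm sgn}(P_{i,j}) \neq {\rm sgn}(\overline{P}_{i,j})\} \subseteq \{|\overline{P}_{i,j} - P_{i,j}| \geq |P_{i,j}|\}$, then sum over ordered pairs and absorb the worst-case gap $\min_{i\neq j} P_{i,j}^2$ into a constant. The one place you diverge is the treatment of $V[\overline{P}_{i,j}]$: the paper simply assumes $\overline{E}[x|r_i]$ and $\overline{E}[x|r_j]$ are independent so that $V[\overline{\mu}_i - \overline{\mu}_j] = V[\overline{\mu}_i] + V[\overline{\mu}_j]$, whereas you bound the covariance via Cauchy--Schwarz to get $V[\overline{P}_{i,j}] \leq 2(V[\overline{E}[x|r_i]] + V[\overline{E}[x|r_j]])$ with no independence assumption. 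Your version is the more defensible one here, since the two estimators share the per-item quantities $\overline{E}[x|d]$ for any item $d \in r_i \cap r_j$ and are therefore positively correlated precisely in the high-duplication regime the paper emphasizes; the cost is only a factor of $2$ that disappears into the constant. Both versions share the same residual weakness you already identified: the bound is vacuous for pairs with $P_{i,j} = 0$, which the paper handles identically by taking $C$ to be the smallest $\Delta_{ij}^2$.
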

\begin{proof}
Let $\mu_i = E[x|r_i]$ and $\overline{\mu}_i = \overline{E}[x|r_i]$.
Without a loss of generality, we assume that $\mu_i > \mu_j$.
The binary error probability $P({\rm sgn}(P_{i,j}) \neq  {\rm sgn}(\overline{P}_{i,j}))$
can be interpreted as the probability of estimating a higher value for $\mu_j$ than that for $\mu_i$,
that is, $P({\rm sgn}(P_{i,j}) \neq  {\rm sgn}(\overline{P}_{i,j}))
=P(\overline{\mu}_j > \overline{\mu}_i) = P(\overline{\mu}_j - \overline{\mu}_i > 0)$.
Letting $\Delta_{ij}=\mu_j - \mu_i$ and $\overline{\Delta}_{ij} = \overline{\mu}_j - \overline{\mu}_i$,
and using Chebyshev's inequality ($P(|x - \mu| > k) \leq \sigma^2 / k^2$), 
this probability can be bounded as follows:
\begin{eqnarray}
P({\rm sgn}(P_{i,j}) \neq  {\rm sgn}(\overline{P}_{i,j})) \nonumber 
&=& P(\overline{\Delta}_{ij} - \Delta_{ij} > - \Delta_{ij})  \nonumber \\
&\leq& P(|\overline{\Delta}_{ij} - \Delta_{ij}| > - \Delta_{ij}) \nonumber \\
&\leq& \frac{V[\overline{\Delta}_{ij}]}{\Delta_{ij}^2} 
= \frac{V[\overline{\mu}_i] + V[\overline{\mu}_j]}{\Delta_{ij}^2}
\end{eqnarray}
Note that $P(|x|>y)=P(x>y)+P(x<-y) \geq P(x>y)$,
and $V[x-y] = V[x] + V[y]$ if $x$ and $y$ are assumed independent.

Therefore, the expected binary error is bounded as follows:
\begin{eqnarray}
E[E_{\rm bin}]&=&\frac{1}{|R|(|R|-1)}\sum_{{r_i,r_j} \in R} P({\rm sgn}(P_{i,j}) \neq  {\rm sgn}(\overline{P}_{i,j})) \nonumber \\
&\leq& \frac{1}{|R|(|R|-1)}\sum_{{r_i,r_j} \in R} \frac{V[\overline{\mu}_i] + V[\overline{\mu}_j]}{\Delta_{ij}^2} \nonumber \\
&\leq& \frac{1}{C|R|}\sum_{{r_i} \in R} V[\overline{\mu}_i] 
= \frac{1}{C|R|} \sum_{r_i \in R}V[\overline{E}[x|r_i]]
\end{eqnarray}
where $C$ is the smallest value for $\Delta_{ij}^2$ ($r_i, r_j \in R$).
Note that $V[\overline{\mu}_i]$ appears $|R| - 1$ times in the summation,
which cancels out $|R| - 1$ in the denominator.
\end{proof}

This theorem suggests that the upper bound of $E[E_{\rm bin}]$ becomes lower if the variance of $\overline{E}[x|r_i]$ is reduced.
This is the theoretical basis of our proposal
and is formally expressed as follows:
\begin{eqnarray}
E[E_{\rm bin}] \rightarrow 0 \ \ \left(\sum_{r_i \in R}V[\overline{E}[x|r_i]] \rightarrow 0 \right).
\end{eqnarray}

\subsection{Stabilization Techniques}
\label{sq:techniques}

There still remain some practical problems associated with the application of DIRV to an online experiment.
These problems are discussed in this section.

\subsubsection{Feature-based Variance Prediction}
The first problem is the variance estimation of the post-click metric $V[x|d]$.
This variance is included in the variance of $\overline{E}[x|d]$ (see Appendix \ref{dep})
and is required to be estimated to produce interleaved rankings. 
However, the estimation can be inaccurate, especially when the sample size is small.
As demonstrated in our experiments, this problem can cause degradation of the estimation efficiency.

To overcome this problem, we propose using a regression model that predicts the variance of the post-click metric based on item features (e.g., titles and categories). 
As a result, we can use two estimates for the variance of the post-click metric $V[x|d]$,
namely, the unbiased variance obtained from the observed values (denoted by $\overline{V}[x|d]$),
and the predicted variance estimated by a regression model (denoted by $\hat{V}[x|d]$).

The larger value between $\overline{V}[x|d]$ and $\hat{V}[x|d]$
(i.e., $\max(\overline{V}[x|d], \hat{V}[x|d])$), is used as the estimate of $V[x|d]$.
This idea is similar to the {\it clipping technique} used in~\cite{chen2019top, gilotte2018offline, su2019cab}.
As DIRV prioritizes items with a high variance in the interleaved ranking, 
there are opportunities to correct the estimation errors 
for items whose variance of the post-click are overestimated.
In contrast, underestimation of the variance is more problematic because there are only limited opportunities to correct the error.
Therefore, we use a higher variance to avoid underestimating the variance.
Our experiments demonstrated that this technique is effective for efficient evaluation.

\subsubsection{Systematic Error Correction}
\label{biasreduction}
The second technique reduces the systematic error between the actual and assumed click models.
Although the click probability $P(c_d=1|r_i)$ is efficiently estimated with the click model,
its estimation accuracy may not be sufficient as the click model does not precisely reflect real user behaviors.

To alleviate this problem, we estimate the click probability $P(c_d=1|r_i)$ by the click model as well as a model-agnostic estimate (i.e., the sample mean of clicks).
We present each ranking $r_i$ with a small probability and observe the clicks on $d$ in $r_i$.
The clickthrough rate (i.e., $\hat{P}(c_d=1|r_i) = n^c_{d, r_i} / n^i_{r_i}$)
is used together with $\overline{P}(c_d=1|r_i)$ in Equation (\ref{eq:clickmodel}),
where $n^c_{d, r_i}$ is the number of clicks on item $d$ in the ranking $r_i$
and $n^i_{r_i}$ is the number of impressions of the ranking $r_i$.
The two estimates are linearly combined for approximating $P(c_d=1|r_i)$:
\begin{eqnarray}
P(c_d=1|r_i) \approx \theta_i \overline{P}(c_d=1|r_i) +  (1 - \theta_i) \hat{P}(c_d=1|r_i)
\end{eqnarray}
where we set the parameter $\theta_i$ to a value decreasing as the sample size increases ( i.e., $1/(n^i_{r_i} + 1)^{0.5}$) to weight the click model when $n^i_{r_i}$ is small.

To present each ranking $r_i$ with a small probability, 
we incorporate another objective function $g(o)$ into the policy for generating rankings.
$g(o)$ is defined as follows:
\begin{eqnarray}
g(o) = \sum_{{\it r_i} \in {\it R}} \theta_i \sum_{d \in {\it r_i}} 
\phi_{d, r_i}(\dot{n}^i_{r_i}, \dot{n}^c_{d, r_i})
\label{final_judgement}
\end{eqnarray}
where 
\begin{eqnarray}
\dot{n}^i_{r_i} = n^i_{r_i} + {\rm sgn}(r_i=o),  \ \ \ \ \dot{n}^c_{d, r_i} = n^c_{d, r_i} + {\rm sgn}(r_i=o)P(c_d=1| o)
\end{eqnarray}
and ${\rm sgn}(r_i=o)$ returns $1$ if all of the items are identical in $r_i$ and $o$, otherwise $0$.
In the last procedure for generating rankings, we select the ranking $o$ that minimizes $f(o) + \gamma g(o)$ from the set $\{o^* \}\cup \it{R}$, where $o^*$ is a greedy solution in Equation (\ref{ranking_optimization}), and $\gamma$ is a hyper-parameter.
Although this technique is not justified theoretically, 
it was shown to be effective by our experiments.

\section{Experiment Setup}
\label{sec:experimentsetup}
We describe the experimental settings to answer the following research questions (RQs):
\begin{itemize}
\label{researchquestion}
\item {\bf RQ1}: Can DIRV identify preferences
between rankings more accurately and efficiently than other methods?
\item {\bf RQ2}: How does the variance prediction technique affect the evaluation efficiency?
\item {\bf RQ3}: How does the error correction technique affect the evaluation accuracy?
\end{itemize}

We have two experimental settings for comprehensive validation, namely, the simulation-based setting and the real service setting.
A summary of the experimental settings is included in \tabref{experimetalsummary}.
The simulation-based setting is based on the traditional setting but is extended for the post-click evaluation.
The real service setting is designed to validate methods close to the actual online evaluation by use of raw user feedback about the clicks and post-click values obtained from the raw ranking impressions.
The real service dataset\footnote{It is public available at  https://github.com/koiizukag/DIRV.} are described in Section \ref{realworlddata}.

\subsection{Simulation-based Settings}
\subsubsection{Evaluation Procedure}
With the exception of post-click behavior, we simulated user behavior using the same four steps as in the previous interleaving evaluations \cite{radlinski2013optimized, schuth2014multileaved}:
\begin{enumerate}
    \item {\bf Impression}: A list of ranked items is displayed for a user.
    \item {\bf Click}: The user decides whether to click an item.
    \item {\bf Post-click Behavior}: If the user clicks the item, then the user consumes it, and post-click values are produced.
    \item {\bf Session End}: The user ends the session.
\end{enumerate}
The details of the user's behaviors are as follows:

First, we fix a query by uniformly sampling from a static dataset. 
Then, each method generates rankings from the query and displays them to the users. 
Whether a user clicks on an item depends on the item's click probability.

\begin{table}[t]
  \caption{Experimental Setting Summary}
  \label{tab:experimetalsummary}
  \centering
  \begin{tabular}{ccc}
    \hline
      & Simulation-based  & Real Service \\
    \hline \hline
    Input ranking  & generated & raw data \\
    Systematic error & - & exists \\
    Click & partially generated & raw data \\
    Post-click & partially generated & raw data \\
    \hline
  \end{tabular}
\end{table} 

After clicking on the item, the user is assumed to consume the item.
Then, we can measure the value of a post-click metric (e.g., dwell time).
The post-click value is assumed to follow a particular distribution.
For example, the occurrence of a conversion in EC dataset follows a Bernoulli distribution.
Throughout the experiment, we assumed a cascade click model~\cite{chuklin2015click},
in which users view results from top to bottom and leave as soon as they see a worthwhile item.
Each experiment involved a sequence of 10,000 simulated user impressions. 
All runs were repeated 30 times for each dataset and parameter setting.

\subsubsection{Datasets}
Three datasets were used in the simulation-based settings.
The first dataset (called News) was generated from our news media service named Gunosy. 
The second dataset (called LETOR) is the learning to rank dataset.
These two datasets were used to measure the dwell time that varies for each user.
The third dataset (called EC) is an artificially generated dataset used to simulate e-commerce product purchases that has a predefined post-click value for the purchase price of the product.

\paragraph{\bf News Dataset}
This dataset comes from a real-world, mobile news application.
The News dataset contains the dwell time and the click probability.
We generated this dataset using the implicit feedback from a single day.
Each selected article had more than 20,000 raw post-click values (i.e., the dwell time for each user).
This dataset contains 50 articles, which is sufficient to generate rankings.
The ground truth for click probability and dwell time for each article was calculated from all of the user logs. 

\paragraph{\bf LETOR Dataset}
We used eight publicly available LETOR datasets with varying sizes and representing different search tasks~\cite{schuth2014multileaved, oosterhuis2017ppm}.
Each dataset consisted of a set of queries and a set of items corresponding to each query.
Feature representations and relevance labels were provided for each item and query pair.
The datasets had three labels: unrelated (0), relevant (1), and highly relevant (2).
The LETOR dataset was broken down by task.
Most of the tasks were from the TREC Web Tracks from $2003$ to $2008$ \cite{qin2016letor, clarke2009overview, voorhees2003overview}. 
These TREC Web Tracks were HP2003, HP2004, NP2003, NP2004, TD2003, and TD2004.
Each TREC Web Track contained 50--150 queries.
The OHSUMED dataset was based on a search engine's query log of the MEDLINE abstract database and contained 106 queries.
MQ2007 was based on the million query track~\cite{allan2007million} and consisted of 1,700 queries.

We generated click probability and post-click values because these datasets do not contain these values.
We assumed that a user would read an article for a long time if the article was relevant to the user.
To reflect this, we generated dwell time values using an exponential distribution with the rate parameters set to $\frac{1}{\lambda_d} \sim (\text{relevance}+1) \cdot \text{uniform}(1, 20)$ for each item.
We set the click probability to $P(a_d = 1| d) \sim \text{min}((\text{relevance}+1) \cdot \text{uniform~}(0.0, 0.5), 1)$.

\paragraph{\bf EC Dataset}
We generated a dataset to simulate a product purchase in an e-commerce setting.
This dataset can be easily reproduced based on the parameters described below.
In this dataset, we randomly set the click probability to $P(a_d = 1| d) \sim \text{uniform~}(0.0, 0.5)$, $\text{price}(d) \sim \text{uniform~}(1, 1000)$, and conversion\_rate$(d) \sim \text{uniform~}(0, 0.5)$.
The relationship between the conversion rate and the price is $E[x|d]=\text{conversion\_rate}(d) \cdot \text{price}(d)$.
We generated 50 items using these parameters.

\subsubsection{Input Ranking}
In the LETOR datasets, we generated {\it input rankings} by sorting items with features used in the existing interleaving experiments~\cite{schuth2014multileaved, oosterhuis2017ppm}.
We used the BM25, TF.IDF, TF, IDF, and LMIR.JM features for MQ2007.
For the other datasets, we used the BM25, TF.IDF, LMIR.JM, and Hyperlink features.
First, we randomly selected 20 items to obtain the same number of item candidates for each dataset.
We then sorted ten items by each feature.
As a result, we generated $|R|=5$ rankings of length $|r_i|=10$, which were similar to the settings reported in \cite{schuth2014multileaved}.

For the News and EC datasets, we generated input rankings by prioritizing items with positive feedback.
We assumed that News and EC companies are likely to pay more attention to the user's feedback in their ranking algorithm.
The algorithm for generating input rankings first retrieved the top-$k$ items ranked in descending order of $P(a_d = 1| d)E[x|d]$.
This algorithm was used to approximate the level of user satisfaction.
Next, we randomly retrieved items that had not been selected and appended them to the input ranking.
We continued this process until the ranking reached a certain length.
Finally, we randomly shuffled the order of each input ranking randomly.
We used different values for $k$ in our experiments, which we called the {\it item duplication rate}.
As a result, we generated $|R|=5$ input rankings of length $|r_i|=10$, which are similar values to the setting reported by~\cite{schuth2014multileaved}.

\begin{center}
\begin{table*}[t]
\caption{\bf Evaluation accuracy for the News, EC, and LETOR datasets in the simulation-based setting. The binary error $E_{\rm bin}$ of all the comparison methods after 10,000 impressions on comparisons of $|R| = 5$ rankings was averaged over 30 times per dataset and parameter. The best performances are noted in bold.}

\begin{tabular}{l|ccccc|ccccc} \hline  \hline 
& \multicolumn{5}{c}{News} & \multicolumn{5}{c}{EC} \\ \hline 
Duplication ratio (\%) &0&20&40&60&80&0&20&40&60&80 \\ \hline
A/B Testing&0.100&0.190&0.095&0.140&0.160&                                           \bf 0.015&0.040&0.070&0.075&0.185 \\ \hline
TDM&0.425&0.465&0.470&0.500&0.615&                                                       0.055&0.170&0.200&0.215&0.300 \\ \hline
DIRV w/o Var Pred&0.150&0.280&0.115&0.100&0.110&                                         0.350&0.280&0.275&0.320&0.335 \\ \hline
DIRV w/o Err Corr&\bf 0.075&\bf 0.095&\bf 0.070&\bf 0.035&\bf 0.070&                     0.035 & 0.045&0.055&\bf 0.030&\bf 0.050 \\ \hline
DIRV&0.095&0.165&0.080&0.045&0.075&                                                  \bf 0.015&\bf 0.035&\bf 0.045&0.055&\bf 0.050 \\ \hline
\end{tabular}
\begin{tabular}{l|ccccccccc} \hline
& \multicolumn{8}{c}{LETOR} \\ \hline 
&HP2003&HP2004&TD2003&TD2004&NP2003&NP2004&MQ2007&OHSUMED \\ \hline
A/B Testing&0.085&0.105&0.070&0.110&0.095&0.080&0.100&0.085 \\ \hline
TDM&0.360&0.255&0.295&0.345&0.285&0.330&0.355&0.295 \\ \hline
DIRV w/o Var Pred&0.070&0.055&0.060&0.060&0.075&0.055&0.065&0.035 \\ \hline
DIRV w/o Err Corr&\bf 0.020&0.045&\bf 0.035&\bf 0.025&\bf 0.030&\bf 0.030&\bf 0.015&0.040 \\ \hline
DIRV&0.030&\bf 0.015&\bf 0.035&0.040&0.045&0.035&0.055&\bf 0.020 \\ \hline
\end{tabular}

\label{tab:resaccuracy}
\end{table*}
\end{center}

\subsection{Real Service Settings}
\label{realworlddata}
Simulation-based experiments are easily conducted.
However, their settings are limited in the validation of their assumptions.
In the proposed method, estimating the expected post-click metric relies on the assumption that post-click behavior is independent of other behaviors by the user.
However, this assumption may not always be true.
For example, in the news service, the dwell time on a clicked article may depend on the dwell time of another clicked article when the two articles contain similar or redundant content.
The simulation-based experiments were based on user simulation, which assumes the post-click behavior is independent.
Moreover, the assumed click model may have been wrong or contained estimation errors.
Therefore, the validity of these assumptions was not tested by the simulation-based experiments.
We generated a real service dataset using the rankings presented to real users
to validate the proposed method under conditions closer to actual user behaviors in an online evaluation.

\subsubsection{Real Service Dataset}
The evaluation of interleaving requires 1) {\it input rankings} to be used as the input, 2) generation of {\it interleaved rankings} from the {\it input rankings} for presentation to the users, and 3) {\it user behaviors} associated with the interleaved rankings.
The interleaving method generates interleaved rankings from the input rankings using various policies for generating the rankings.
To evaluate methods in this setting, we generated interleaved rankings by identifying {\it all possible combinations} of the items that the input rankings could generate and collected user behavior associated with the interleaved rankings from the service log.
As they cover all possible interleaved rankings, 
one can obtain real statistics observed for any interleaved rankings without actually presenting them.

The data were collected in a news application.
The news application was the same application used in the simulation-based News dataset.
This dataset consisted of a set of queries, a set of input rankings and the interleaved rankings with user behaviors corresponding to each query.
User behaviors data contained click or not for each article in the ranking and dwell-time for each article that the user clicked. 
The query was composed of the topic and day requested by the user.
We used a social topic that was one of the most popular topics in this service.
The ranking was personalized and differed for each user and query pair.
The top three rankings with the highest number of impressions for each day were selected as the input ranking.
The rankings were fixed at three in length to reduce the possible number of item combinations,
since the number of interleaved rankings could be explosively large depending on the number of items in the input rankings.
We generated 30 pairs of queries and their respective responses for a million-scale ranking impression in a one-month period.

The real service setting has advantages and disadvantages over simulation-based settings.
The real service dataset was limited by the number of items in the input rankings.
In addition, interleaved rankings did not cover all combinations of the items in the input rankings.
On the other hand, 
this real service setting did allow us to test the assumptions of the model close to the actual online evaluation.

\subsubsection{Experimental Runs}
The ground truth of the preference in $E_{\rm bin}$ was calculated from half of the total data included for each query using A/B testing logic.
The method was validated using the other half of the data.
The experimental runs consisted of 5,000 ranking impressions for each of the 30 queries.
These procedures were iterated 30 times.

\begin{figure*}[t]
\begin{center}
\begin{tabular}{ccc}
\begin{minipage}{0.3\hsize}
\begin{center}
\includegraphics[scale=0.3]{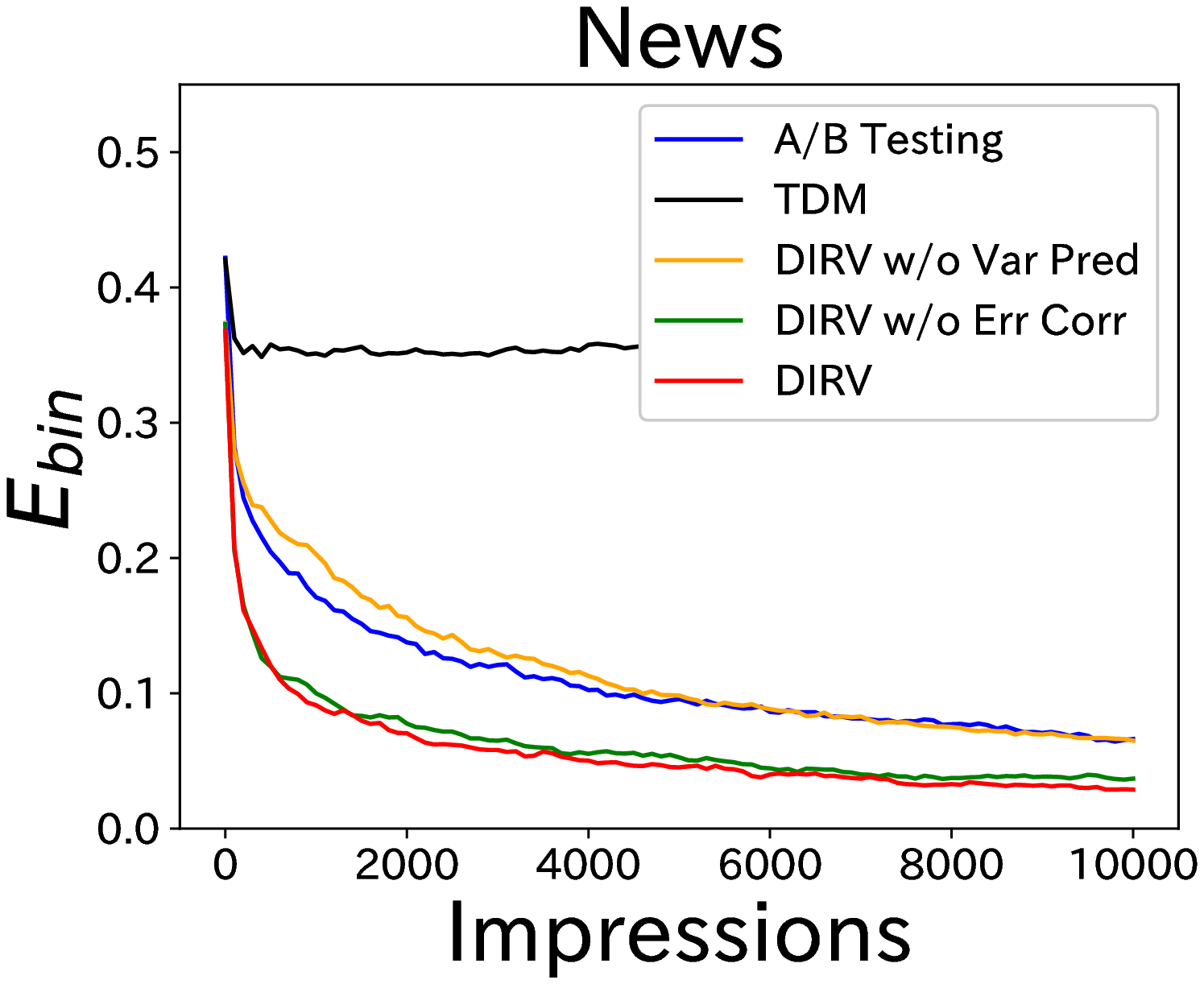}
\end{center}
\end{minipage}
&
\begin{minipage}{0.3\hsize}
\begin{center}
\includegraphics[scale=0.3]{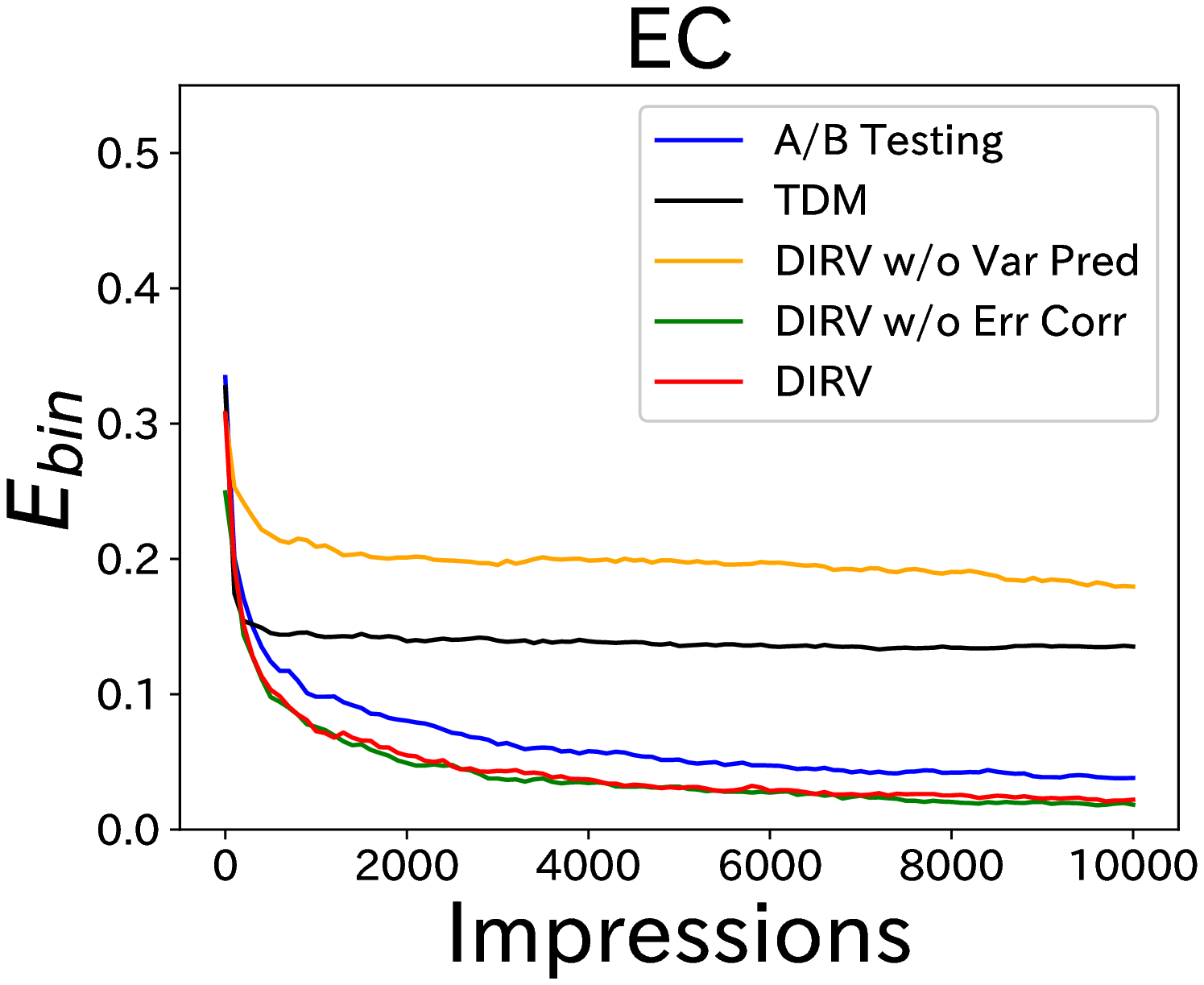}
\end{center}
\end{minipage}
&
\begin{minipage}{0.3\hsize}
\begin{center}
\includegraphics[scale=0.3]{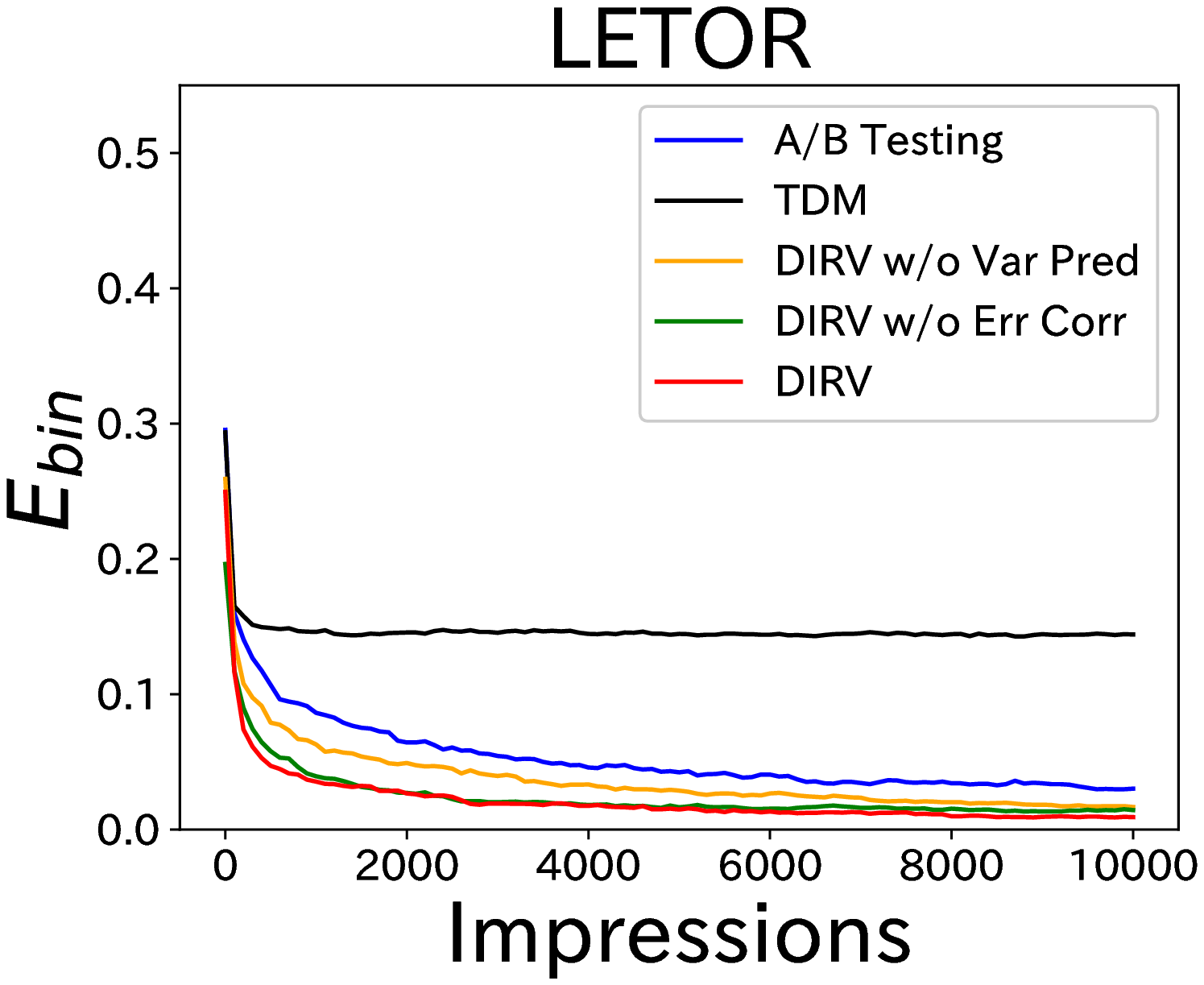}
\end{center}
\end{minipage}
\end{tabular}
\caption{$E_{\rm bin}$ averaged over the number of impressions of the News, EC, and LETOR datasets. For all the datasets, DIRV or DIRV w/o Err Corr had the lowest $E_{\rm bin}$ for each impression.}
\label{fig:efficiencty}
\end{center}
\end{figure*}

\subsection{Parameter Estimation}
\label{parameterestimation}
Each parameter was estimated as follows:
$E[x|d]$ was estimated by the sample mean of the observed post-click value $x$ for item $d$ over all rankings.
${P}(e_{d_j}=1|r_i)$ that was the examination probability of $j$-th item in the ranking $r_i$ was estimated assuming the cascade click model as $\overline{P}(e_{d_j}=1|r_i) = \prod_{k=1}^{j-1} (1-\overline{P}(c_{d_k}=1|d_k))$ where
$\overline{P}(c_{d_k}=1|d_k)=\overline{P}(e_{d_k}=1|d_k)\overline{P}(a_{d_k}=1|d_k)$.
$P(a_d=1|d)$ was estimated by $\overline{P}(a_d=1|d)=n^c_d/n^e_d$ where $n^c_d$ is the number of clicks and $n^e_d$ is the number of presentations of item $d$ over all of the rankings.
We note that $n^e_d$ is incremented for the items between the 1st position and the last position of the click in the ranking for each ranking impression.
If there were no clicks in the ranking, $n^e_d$ for each item was increased. 

\subsection{Variance Prediction}
We used the News and real service datasets for studying the variance prediction because the EC and LETOR datasets had no ground-truth for the variance.
We generated a dataset that included each article's title length and category, as well as the media source and the sample variance for ground-truth that was calculated from the dwell time of all users.
A tree-based training model with a gradient-boosting framework~\footnote{https://github.com/microsoft/LightGBM} was used for prediction.
We used the features detailed in \tabref{featureimportance} for prediction.
The training epoch was set to 1,000.
The root mean square error was used for the loss function.
The early stopping parameter was 10. 
The other parameters were set to default values.

\begin{table}[tb]
 \caption{Features and importance}
 \label{tab:featureimportance}
 \centering
  \begin{tabular}{lc}
   \hline
   Feature & Importance \\
   \hline \hline
   Category ID to which the article belongs & 879 \\
   Supplier ID of the article & 2,342  \\
   Content length of the article & 1,854 \\
   Title length of the article  & 1,045 \\
   \hline
  \end{tabular}
\end{table}
\figref{varianceprediction} shows the plots of the predicted variance and the actual variance.
A Pearson correlation value of 0.76 was achieved by only using meta-features.
Feature importance is shown in \tabref{featureimportance}.

In the EC and LETOR datasets, we artificially generated the predicted variance. 
Specifically, uniformly randomized noise was added to the population variance $V[x|d]$ to generate a predicted $\hat{V}[x|d]$ that satisfies $\hat{V}[x|d] \leq 2V[x|d]$.

\subsection{Comparison Methods}
We used five methods for the evaluation: A/B Testing, modified TDM, DIRV w/o Var Pred, DIRV w/o Err Corr, and DIRV.
PPM is designed to evaluate click-based metrics and considered as the state-of-the-art method for interleaving~\cite{oosterhuis2017ppm}.
However, it is not trivial to modify PPM to evaluate post-click metrics.
As previously discussed, the modified TDM~\cite{schuth2015predicting} is the only method designed to evaluate non-click-based metrics using interleaving.
Based on~\cite{schuth2015predicting}, we modified the credit function to be the product of the post-click value and the original credit function of the TDM.
DIRV w/o Var Pred is a DIRV method without the variance prediction,
while DIRV w/o Err Corr is a DIRV method without the error correction.
DIRV is a method that has both stabilization techniques.
The hyper-parameter $\gamma$ was set to 1.0.

TDM could not be used in the real service setting because there are cases where the rankings generated by the TDM might not exist in the real service data, as discussed in \ref{realworlddata}.
Therefore, TDM was only used for the simulation-based experiments.
In the DIRV methods for the real service setting, we generated rankings that minimize variance using $f(o)$ and $g(o)$ from the candidates of the interleaved rankings.

\section{Results and Discussion}
\label{sec:results}

\begin{figure*}[t]
\begin{center}
\begin{tabular}{ccc}

\begin{minipage}{0.3\hsize}
\begin{center}
\includegraphics[scale=0.3]{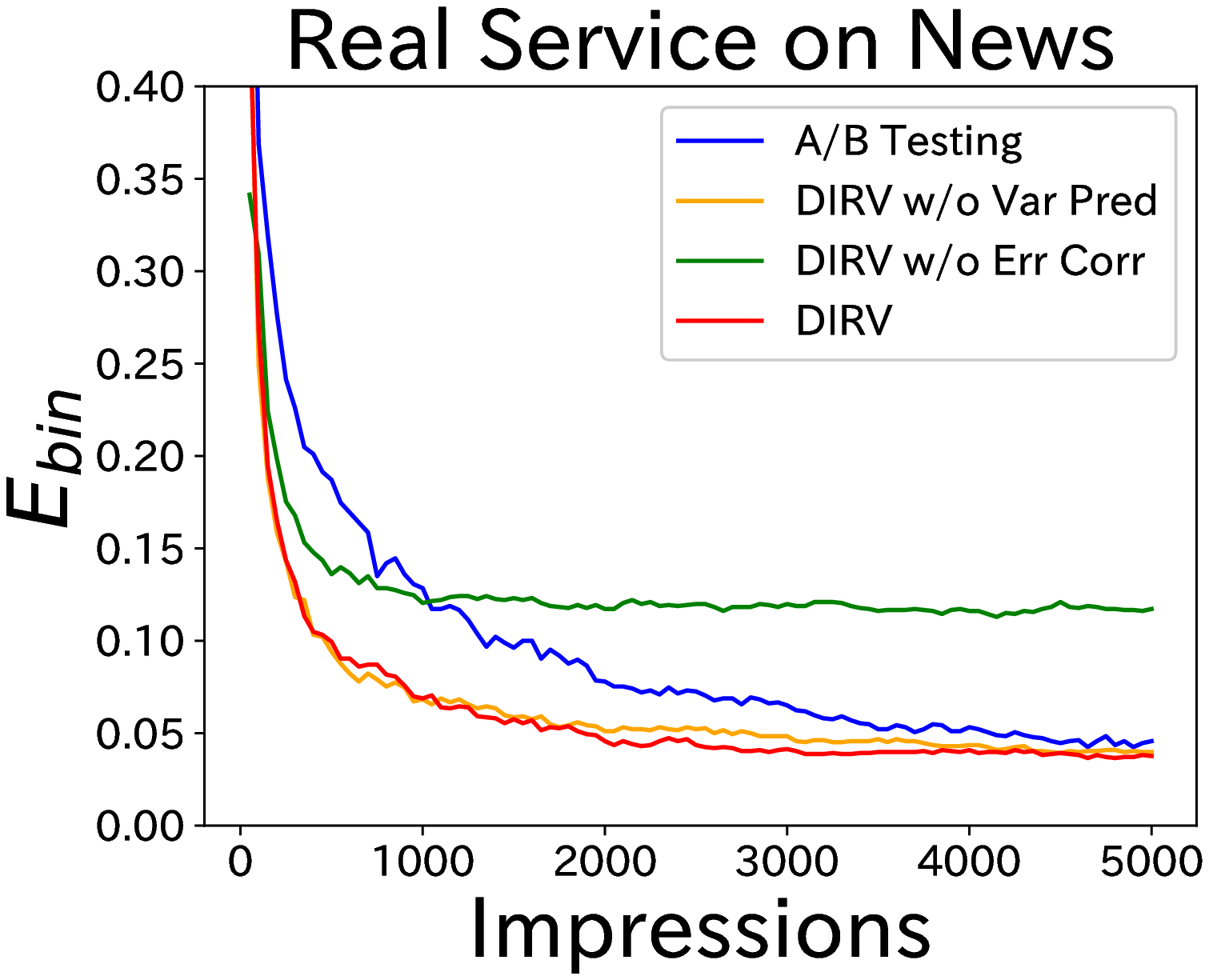}
\caption{$E_{\rm bin}$ averaged over the number of impressions of the real service News dataset. DIRV or DIRV w/o Var Pred had the lowest $E_{\rm bin}$ for each impression.}
 \label{fig:realefficiencty}
\end{center}
\end{minipage}
&
\begin{minipage}{0.3\hsize}
\begin{center}
\includegraphics[scale=0.3]{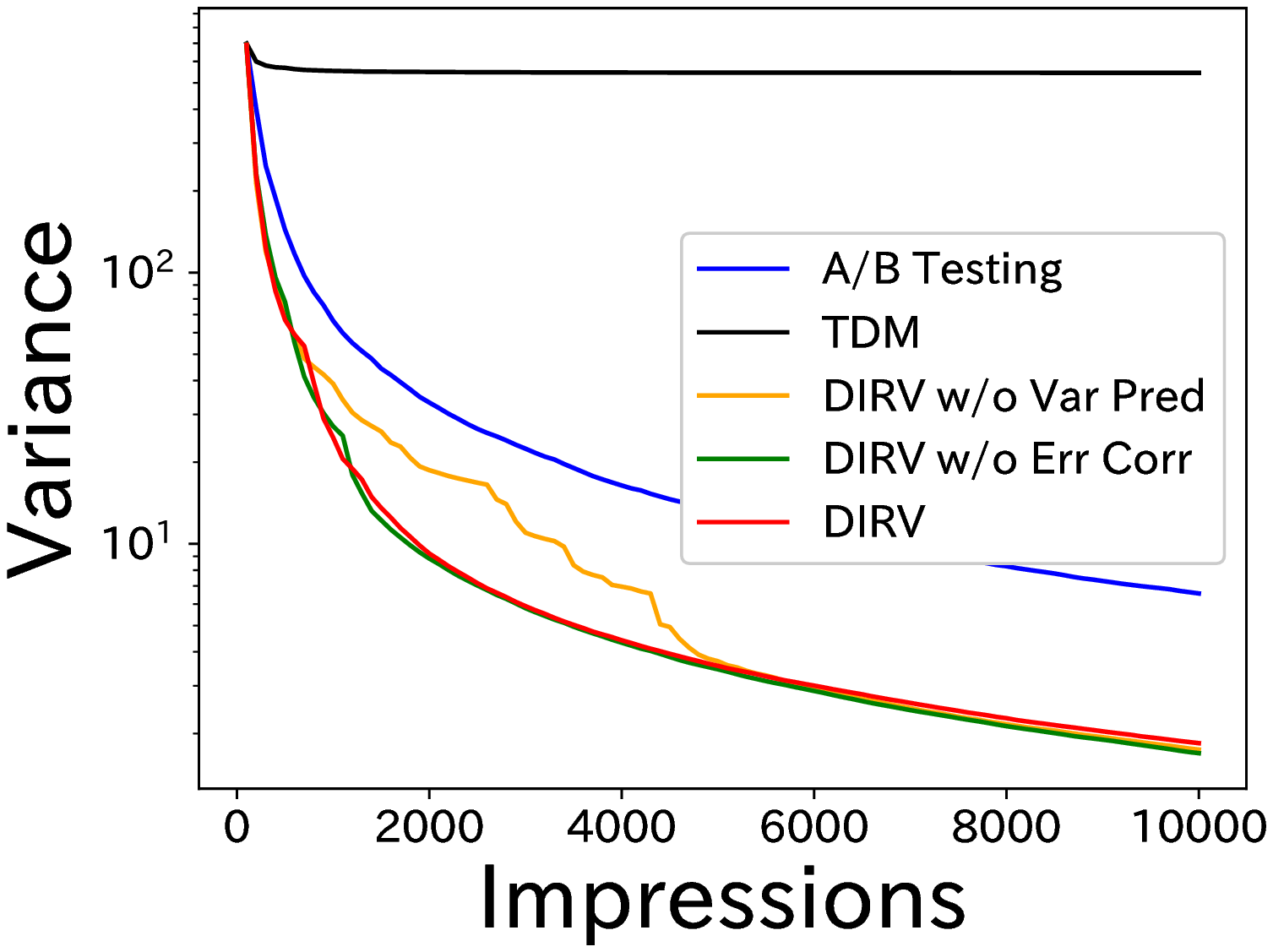}
\caption{Variance versus the number of impressions in the simulation-based News dataset. The result shows that DIRV had lower variance for each impression, especially for a small number of impressions.}
\label{fig:variance} 
\end{center}
\end{minipage}
&
\begin{minipage}{0.3\hsize}
\begin{center}
\includegraphics[scale=0.25]{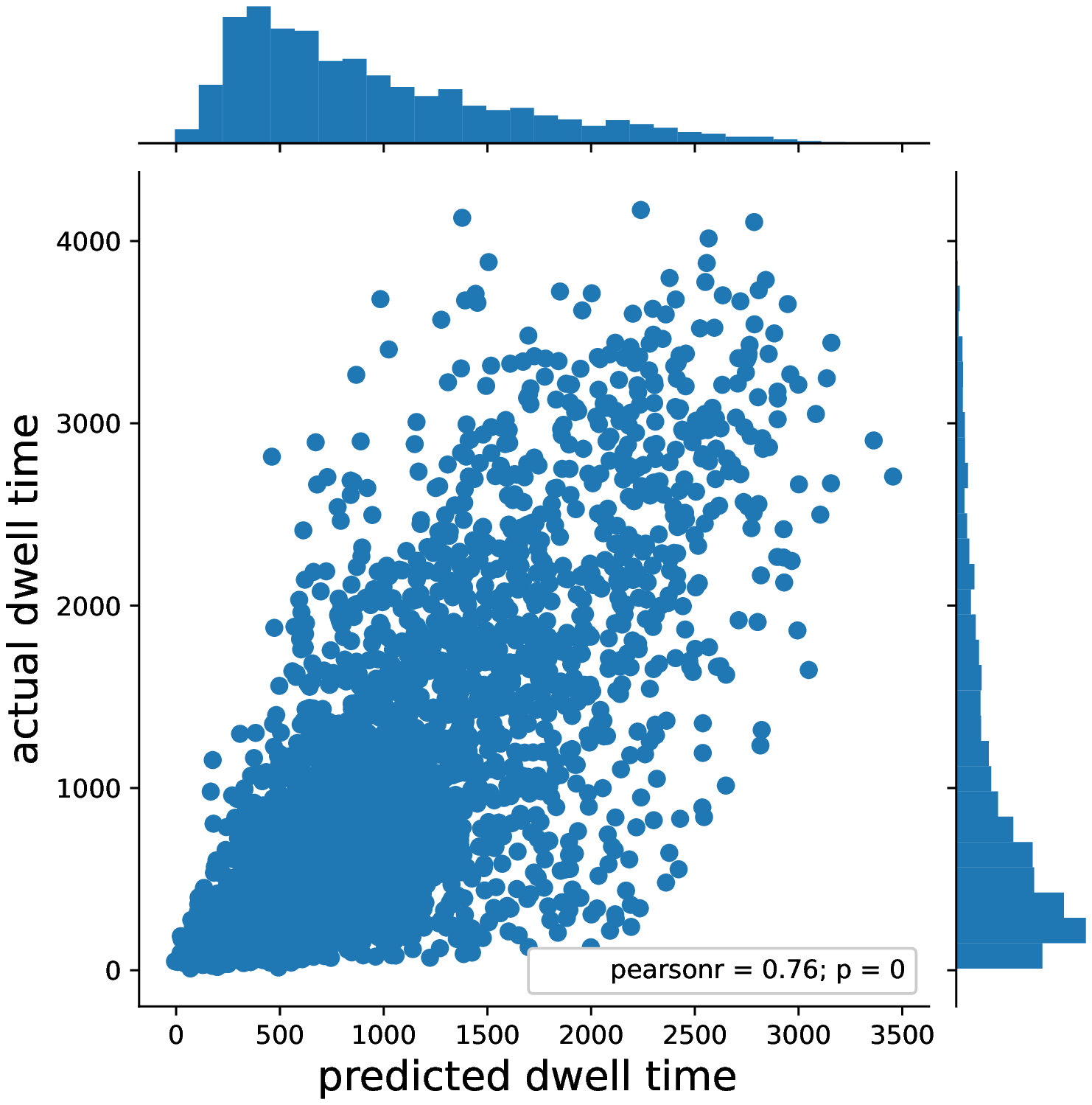}
\caption{Visualization of the predicted variance. Each dot corresponds to the predicted and actual variance values for each News article. The Pearson's correlation value was $0.76$.}
\label{fig:varianceprediction} \end{center}
\end{minipage} 

\end{tabular}
\end{center}
\end{figure*}

In this section, we discuss the experimental results for answering the research questions posed in section \ref{researchquestion}.

\subsection{RQ1: Can DIRV identify preferences
between rankings more efficiently and accurately than comparison methods?}

\subsubsection{Efficiency}
\figref{efficiencty} shows the efficiency results for the simulation-based setting for the News, LETOR, and EC datasets.
\figref{realefficiencty} shows the efficiency for the real service setting.
In Figures \ref{fig:efficiencty} and \ref{fig:realefficiencty}, the x-axis represents the number of impressions, and the y-axis represents $E_{\rm bin}$.
For all of the datasets, DIRV had the lowest $E_{\rm bin}$ for each impression.
In contrast to the simulation-based setting, DIRV w/o Err Corr had higher $E_{\rm bin}$ after $1,000$ impressions compared to the results from the A/B testing in the real service setting.
TDM stopped decreasing $E_{\rm bin}$ during the early stage in all datasets.

Among the evaluated methods, DIRV achieved the highest efficiency by reducing the variances.
DIRV was designed to reduce variance by (1) aggregating post-clicks from different rankings that leads to an increased sample size for each item and (2) exposing the item that has high variance.
DIRV successfully reduced the variance, which led to high efficiency (as shown in \figref{variance}).

\subsubsection{Accuracy}
\tabref{resaccuracy} details the evaluation accuracy results.
The number closest to 0.0 for each parameter and dataset are highlighted in bold.
In the LETOR dataset, there were eight $E_{\rm bin}$ results for each dataset.
In the News and EC datasets, there were five $E_{\rm bin}$ results based on the item duplication ratios (which ranged from 0\% to 80\%, in 20\% increments).
\figref{realefficiencty} shows the accuracy of the real service setting in the last impression (i.e., at 5,000 impressions).

Overall, DIRV outperformed the existing methods for all of the datasets.
A/B testing resulted in the lowest $E_{\rm bin}$ in the EC dataset with duplication ratios of 0\%.
We designed the estimator using the expectation for post-click metrics.
DIRV achieved high accuracy compared with TDM due to the use of our estimator. 
It is remarkable that our method performed well when many items were shared among the input rankings (i.e., high  duplication ratio).
Similar to the results from \cite{schuth2015predicting}, our results demonstrated that TDM is difficult to extend for aggregating continuous values like post-click values.

Regarding {\bf RQ1}, DIRV outperformed the existing methods in efficiency and accuracy. 
The performance was especially remarkable when many items were shared among the input rankings.

\subsection{RQ2: How does the variance prediction technique affect the evaluation efficiency?}
\figref{variance} details the variance reduction on the News dataset in the simulation setting.
The results show that DIRV achieved the lowest variance among the evaluated methods.
DIRV w/o Var Pred decreased the variance slowly compared to DIRV, especially for a small number of impressions.
TDM had the highest variance for each impression.
These trends were also observed in the other datasets.

\figref{variance} shows that the predicted variance contributed to reducing the variance in small impressions.
In contrast, the variance of some items was underestimated in DIRV w/o Var Pred and could not obtain a sufficient number of exposures during the early stage.
This resulted in reducing the variance in the post-click metrics slowly.
The variance prediction technique reduced the variance and improved efficiency.
In TDM, some items at the bottom of the input ranking was not selected for the interleaved ranking, which led to high variance.

\figref{realefficiencty} shows that the variance prediction did not contribute to the efficiency in the real service setting.
This is because the total number of items was small in this setting.
Thus, there were fewer benefits from manipulating the order of the items using the predicted variance.

Regarding {\bf RQ2}, the variance prediction techniques contributed to reducing the variance, which led to improved efficiency.

\subsection{RQ3: How does the error correction technique affect the evaluation accuracy?}
\figref{realefficiencty} shows the accuracy of the real service setting at the end of the evaluation (i.e., at 5,000 impressions).
The results show that the accuracy was close to 0.0 for DIRV and A/B testing methods.
In contrast, the DIRV w/o Err Corr method stopped decreasing the $E_{\rm bin}$ after 1,000 impressions.

The results show that the error was reduced as the number of impressions increased in the real service setting by introducing an error correction technique.
In particular, we successfully confirmed that the evaluation error converged around at 0.0 using the error correction technique, which reduced the systematic error as the experiment progressed.
This result supported the assumption that the post-click behavior was independent of other user behavior in this real service setting. 

The difference between the assumed cascade click model and the actual user behavior in the real service setting resulted in a systematic error.
\figref{realefficiencty} illustrated the error in the results of the DIRV w/o Err Corr method.
In the simulation-based setting, we assumed that the actual user behavior also obeyed the cascade click model.
This assumption led to the same accuracy between the DIRV and DIRV w/o Err Corr methods in \tabref{resaccuracy}.

Regarding {\bf RQ3}, we reduced the systematic error using the error correction technique.
The error correction technique led to improved accuracy when an estimation error existed in the click model.

\section{Conclusion and Future Work}
\label{sec:conclusion}
In this study, we proposed a method for accurate and efficient post-click evaluation using interleaving.
First, we introduced the click model to aggregate the post-click values.
Then, we showed that minimizing the variance of post-click metrics leads to a reduction in the evaluation error.
Next, we provided a policy to generate rankings to reduce the variance.
Finally, we proposed two stabilization techniques to support the proposed method.

To evaluate this method, we conducted comprehensive experiments with both simulation-based and real service settings.
The experimental results from this study indicated that 1) the proposed method outperformed existing methods in efficiency and accuracy and the performance was especially remarkable when many items were shared among the input rankings,
2) the variance prediction techniques contributed to reducing the variance, which led to improved efficiency,
and 3) we could successfully reduce the systematic error using the error correction technique, which led to improved accuracy when there existed an estimation error in the click model.

In this study, we focused on building an online evaluating method.
In the future, we plan to extend our framework to include a bandit algorithm in ranking optimization.

\section*{Appendix}
\appendix

\section{Variance of $\overline{E}[x|r_i]$}
\label{dep}

Noting that $V[x + y] = V[x] + V[y]$ if $x$ and $y$ are independent,
we obtain:
\[
V[\overline{E}[x|r_i]] = V \left[ \sum_{d \in r_i} \overline{P}(c_d=1|r_i) \overline{E}[x|d] \right] =  \sum_{d \in r_i} V \left[ \overline{P}(c_d=1|r_i) \overline{E}[x|d] \right] \nonumber
\]

Using $V[xy] = V[x]V[y] + E[x]^2V[y] + V[x]E[y]^2$ ($x$ and $y$ are independent),
the variance for each item is obtained as follows:
\begin{eqnarray}
V \left[ \overline{P}(c_d=1|r_i) \overline{E}[x|d] \right] 
&=& V[\overline{P}(c_d=1|r_i)] V[\overline{E}[x|d]] \nonumber \\
&+& E[\overline{P}(c_d=1|r_i)]^2 V[\overline{E}[x|d]] \nonumber \\
&+& E[\overline{E}[x|d]]^2 V[\overline{P}(c_d=1|r_i)] \nonumber
\end{eqnarray}
Since $V[\overline{x}] = \sigma^2/n$ ($\overline{x}$ is the sample mean, $\sigma^2$ is the population variance,
and $n$ is the sample size),
$V[\overline{P}(c_d=1|r_i)]$ can be computed as 
$
V[\overline{P}(c_d=1|r_i)] = V[P(c_d=1|d)]/n^i_d,
$
where $n^{i}_d$ is the number of impressions of item $d$.
We can also obtain $V[\overline{E}[x|d]] = V[x|d] / n^c_d$ where $n^c_d$ is the number of clicks on item $d$.

Finally, we express the variance of $\overline{E}[x|r_i]$
as a function of the samples sizes $n^i_d$ and $n^c_d$:
\begin{equation*}
\begin{split}
V[\overline{E}[x|r_i]]
&=\sum_{d \in r_i} \left\{ \frac{V[P(c_d=1|d)]}{n^i_d}
\frac{V[x|d]}{n^c_{d}} \right. 
+ E[\overline{P}(c_d=1|r_i)]^2 \frac{V[x|d]}{n^c_{d}} \\
&+ \left. E[\overline{E}[x|d]]^2  \frac{V[P(c_d=1|d)]}{n^i_d} \right\} 
 \coloneqq \sum_{d \in r_i} \phi_{d, r_i}(n^i_d, n^c_d) \nonumber
\end{split}
\end{equation*}

The function $\phi_{d, r_i}$ monotonically decreases for $n^i_d$ and $n^c_d$ when $x$ is always positive.
As $\overline{P}(c_d=1|r_i)$ and $\overline{E}[x|d]$
are unbiased estimators,
$E[\overline{P}(c_d=1|r_i)]$ and $E[\overline{E}[x|d]]$
can be approximated by $\overline{P}(c_d=1|r_i)$
and $\overline{E}[x|d]$ with a sufficient number of samples.
Assuming that $P(a_d|d)$ follows a Bernoulli distribution,
we can also approximate $V[P(c_d=1|d)]$ by $\overline{P}(c_d=1|r_i) (1-\overline{P}(c_d=1|r_i))$.

\bibliographystyle{ACM-Reference-Format}
\balance
\bibliography{sample-base}

\end{document}